\documentclass{article}


\usepackage[letterpaper,top=2cm,bottom=2cm,left=3cm,right=3cm,marginparwidth=1.75cm]{geometry}

\usepackage{amsmath,amsthm,amssymb,enumitem,todonotes,cancel}

\newtheorem{theorem}{Theorem} 
\newtheorem{lemma}{Lemma}

\theoremstyle{remark}
\newtheorem*{remark}{Remark}

\newcommand{\nat}{\mathbb{N}}

\newcommand{\zo}{\{0,1\}}
\newcommand{\mapping}{\rightarrow}

\newcommand{\samp}{\textsf{Samp}}
\newcommand{\compress}{\mathsf{Compress}}
\newcommand{\inv}{\mathsf{Inverter}}

\newcommand{\aeowf}{$\exists$ OWF\,}
\newcommand{\noaeowf}{\cancel{$\exists$} OWF\,}

\usepackage{graphicx}
\usepackage[colorlinks=true, allcolors=blue]{hyperref}
\usepackage{cleveref}

\DeclareMathOperator{\prob}{{Prob}}
\title{On one-way functions  and the average time complexity of almost-optimal compression}
\author{Marius Zimand\thanks{Department of Computer and Information Sciences, Towson University, Baltimore, MD. Partially supported by a grant from the School of Emerging Technologies at Towson University.}}

\date{}
\begin{document}
\maketitle

\begin{abstract}
We show that one-way functions exist if and only if there exists an efficiently-samplable distribution relative to which almost-optimal compression is hard on average.  The result is obtained by combining a theorem of  Ilango, Ren, and Santhanam~\cite{ila-ren-san:t:owf-kolm, ila-ren-san:c:owf-kolm} and one by  Bauwens and Zimand~\cite{bau-zim:j:univcompression}.
\end{abstract}

\section{Introduction}

Several recent papers show that the existence of  one-way functions (OWF) is equivalent to the hardness of certain problems in meta-complexity~\cite{liu-pass:c:OWF-Kolm,liu-pass:c:OWF-2021,ren-san:t:KT-crypto, ila-ren-san:t:owf-kolm,ila-ren-san:c:owf-kolm,liu-pass:t:OWF-timeKolm,liu-pass:c:probKolm,hir-lu-oli:c:pKt,lu-san:c:impagliazzo}. 
The motivation for this research line comes primarily from cryptography, where  one-way functions play a central role\footnote{See~\cite{hir-lu-oli:c:pKt} and~\cite{lu-san:c:impagliazzo} for a discussion of some of these and related works.}. 
Ilango, Ren and Santhanam~\cite{ila-ren-san:t:owf-kolm,ila-ren-san:c:owf-kolm} have obtained a result of this type involving standard (unbounded) Kolmogorov complexity. Informally speaking,  they have shown that one-way functions exist if and only if ``finding good approximations of Kolmogorov complexity" is hard on average with respect to some polynomial-time samplable distribution.  Bauwens and Zimand~\cite{bau-zim:j:univcompression} have shown that given a good approximation of the Kolmogorov complexity of a string $x$, one can compress $x$ in probabilistic polynomial time to a string of length close to its complexity (so, $x$ is almost-optimally compressed). The combination of these 2 results yields the following theorem. 
\begin{theorem}[Informal statement]
\label{t:main}
  The following two assertions are equivalent:
  \begin{enumerate}
      \item There exists a one-way function. 
      \item Almost optimal compression is hard on average with respect to some polynomial-time samplable distribution.

  \end{enumerate}
\end{theorem}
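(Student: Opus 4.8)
The plan is to route the equivalence through the hardness of approximating Kolmogorov complexity, treating the two cited results as black boxes and supplying only the ``bridge'' that links approximation to compression. Concretely, I would first fix formal versions of the two sides: on one side the existence of a one-way function, and on the other that for some polynomial-time samplable $\mcD$ every probabilistic polynomial-time $\compress$ (paired with a decompressor) fails, with non-negligible probability over $x \sim \mcD$, to output a codeword of length close to $\C(x)$ from which $x$ is recovered. Since the Ilango--Ren--Santhanam theorem already gives OWF $\Leftrightarrow$ ``good approximation of $\C$ is hard on average with respect to some samplable distribution,'' it suffices to prove that \emph{almost-optimal compression is hard on average if and only if approximating $\C$ is hard on average}, with the same (or a polynomially related) samplable distribution, and then chain the two statements.

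For the direction ``compression easy $\Rightarrow$ approximation easy'' I would use the elementary fact that the output length of any correct compressor is an upper bound on $\C(x)$ up to an additive $O(\log)$ term, since the decompressor is a fixed program; hence an almost-optimal $\compress$ yields, through its output length alone, a good two-sided estimate of $\C(x)$ on the typical $x \sim \mcD$, converting an average-case compressor into an average-case approximator on the same distribution. For the converse, ``approximation easy $\Rightarrow$ compression easy,'' I would invoke the Bauwens--Zimand theorem: feed the estimate $k \approx \C(x)$ produced by the approximator into the BZ probabilistic compressor, which emits a codeword of length $\approx k$ that the BZ decompressor recovers with high probability, so composing the approximator with the BZ (de)compressor gives an almost-optimal scheme that succeeds on the typical input. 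Running each bridge direction against the contrapositive of the appropriate IRS implication then closes both directions of the main equivalence.

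The main obstacle I anticipate is parameter and distribution bookkeeping rather than any single hard idea. I must ensure that the additive slack hidden in ``$\approx$'' tolerated by almost-optimal compression matches the approximation quality guaranteed by IRS and required by BZ: the logarithmic and polylogarithmic overheads, the dependence on $|x|$, and any conditional-versus-unconditional subtleties in the complexity measure must all line up so that the composed algorithm still lands inside the almost-optimal regime. Equally delicate is the probabilistic failure of the BZ compressor, whose error probability must be absorbed into the non-negligible success probability demanded by the average-case formulation, while the samplable distribution witnessing hardness on one side must be exhibited as a samplable distribution on the other. Finally, I would verify that both reductions preserve the quantifier structure over input lengths, so that an infinitely-often hardness and the correspondingly infinitely-often one-way function are mapped to the same regime on each side.
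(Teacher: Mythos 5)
Your proposal is correct and is essentially the paper's own approach: the equivalence is obtained by composing the OWF-versus-``approximating $K$'' characterization with the Bauwens--Zimand compressor, and your two bridge directions (the compressor's output length as a two-sided estimator of $K(x)$, and feeding an estimate $m \geq K(x)$ into the BZ compressor to get a codeword of length $m + O(\log^2 n)$) are exactly the steps the paper uses. The one practical caveat, which the paper itself flags, is that the Ilango--Ren--Santhanam theorem is not published in precisely the form your chaining needs, so the paper re-derives the two approximation-hardness directions directly from Impagliazzo--Levin--Luby (for ``no OWF $\Rightarrow$ approximation easy'') and from HILL pseudorandom generators (for ``OWF $\Rightarrow$ a distinguisher built from the estimator breaks a PRG''); this is exactly the parameter and statement-matching work you already anticipate in your last paragraph.
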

The result of Ilango et.\,al.\,~\cite{ila-ren-san:t:owf-kolm,ila-ren-san:c:owf-kolm} is not exactly stated in the form that we mentioned above. For this reason, we prefer to give a proof which does not directly invoke~\cite{ila-ren-san:t:owf-kolm,ila-ren-san:c:owf-kolm}, but which closely follows their method. In one direction, it is based on  results of Impagliazzo, Levin and Luby\cite{imp-lev:c:hardinstances,imp-lub:c:owf-and-crypto} connecting the existence of OWFs to the hardness of approximating poly-time samplable distributions, and, in the other direction, it is based on the connection between OWFs and pseudo-random generators established by H{\aa}stad, Impagliazzo, Levin and Luby~\cite{ha-im-lev-lub:j:prgenonewayf}.

\section{Definitions, and technical tools}
\textbf{\emph{Kolmogorov complexity.}}
We fix an optimal universal Turing machine $U$ with prefix-free domain. A program  for string $x$ is a string $p$ such that $U(p)=x$. The prefix-free Kolmogorov complexity $K(x)$ of the string $x$ is the length of a shortest program for $x$.\footnote{The prefix-free Kolmogorov complexity $K(x)$ is a little more convenient for the proof than the plain complexity $C(x)$. The difference $K(x) - C(x)$ is bounded by $2 \log |x|$ and, therefore, the result is valid for $C(x)$ as well.}
\medskip

\noindent
\textbf{\emph{Distributions.}} We consider ensembles of distributions. An ensemble has the form $D = (D_n)_{n \in \nat}$, where each $D_n$ is a distribution on $\zo^n$. The ensemble $D$ is \emph{samplable} if there exists a probabilistic  algorithm $\samp$, such that for every $n$ and every $x \in \zo^n$, 
\[
\prob ~[\samp(1^n) = x] = D_n(x)
\]
(the probability is over the randomness of $\samp$).
\smallskip

\noindent
$D$ is said to be \emph{P-samplable}, in case  $\samp$ runs in polynomial time.
\smallskip

\noindent
Some notation: For every $x$, we denote $D(x) = D_{|x|}(x)$.
For every $m$, $U_m$ denotes the uniform distribution over $m$-bit strings.
\begin{lemma}
\label{l:coding-upperbound} If $D$ is samplable, then for every $x$ in its support,
\[
K(x) \leq \log \frac{1}{D(x)} + 3 \log (|x|) + O(1).
\]
\end{lemma}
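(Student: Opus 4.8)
The plan is to produce a lower-semicomputable semimeasure $Q$ on $\zo^*$ whose negative logarithm matches the claimed bound, and then to invoke the coding theorem, which guarantees $K(x) \le -\log Q(x) + O(1)$ for every lower-semicomputable semimeasure $Q$, the additive constant depending only on $Q$ (and hence, in our case, only on the fixed sampler $\samp$).

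First I would record that samplability makes the function $x \mapsto D(x)$ lower semicomputable. Indeed, $D_n(x) = \prob[\samp(1^n)=x]$ is the measure of the set of random-bit sequences on which $\samp(1^n)$ halts with output $x$; by simulating $\samp(1^n)$ on all finite prefixes of its random tape and summing $2^{-(\text{prefix length})}$ over those prefixes that already force output $x$, one obtains a nondecreasing sequence of rationals converging to $D_n(x)$. Since $|x|$ determines $n$, the map $x \mapsto D(x) = D_{|x|}(x)$ is lower semicomputable. Next I would glue the family $(D_n)_n$ into a single object on $\zo^*$: setting $Q(x) = w_{|x|}\, D_{|x|}(x)$ for fixed positive weights $w_n$ with $\sum_n w_n \le 1$, and using $\sum_{|x|=n} D_n(x) = 1$, gives $\sum_x Q(x) = \sum_n w_n \le 1$, so $Q$ is a semimeasure, and it is lower semicomputable because $D$ is and the weights are computable. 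Choosing $w_n = 1/\big(n(n+1)\big)$ (which telescopes to $1$) yields $-\log Q(x) = \log \tfrac{1}{D(x)} + 2\log|x| + O(1)$, comfortably within the claimed $3\log|x| + O(1)$. Applying the coding theorem to $Q$ then gives $K(x) \le -\log Q(x) + O(1) \le \log \tfrac{1}{D(x)} + 3\log|x| + O(1)$.

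I expect the one genuinely delicate point to be the construction of $Q$ and the choice of encoding mechanism. One cannot simply run a clean Shannon--Fano argument with the exact values $D_n(x)$, because these are only lower-semicomputable, not computable; the argument must route through the effective coding theorem, or equivalently through a Kraft--Chaitin request-set construction that reads only the successive lower approximations to $Q$ and issues a request of length $\lceil -\log(\text{current approximation})\rceil$ as the approximation grows. Keeping the weight sequence $(w_n)$ both summable and computable is exactly what certifies that $Q$ is a legitimate lower-semicomputable semimeasure, which is precisely the hypothesis the coding theorem requires; the remaining work is routine bookkeeping on the logarithmic terms, together with the trivial treatment of the empty string.
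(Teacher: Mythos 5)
Your proof is correct, but it takes a genuinely different route from the paper's. The paper argues elementarily: given $n$ and the code of $\samp$, it computes (approximations of) $D_n(y)$ for all $y \in \zo^n$, lists the strings in decreasing order of probability, and describes $x$ by its rank $t \leq \lceil 1/D_n(x) \rceil$ in that list, paying an extra $2\log|x| + O(1)$ bits for a self-delimiting encoding. You instead glue the family $(D_n)_n$ into a single lower-semicomputable semimeasure $Q(x) = w_{|x|} D(x)$ with computable summable weights and invoke Levin's coding theorem. Both arguments are sound, and your bookkeeping (the telescoping weights, $\sum_x Q(x) \le 1$, the separate treatment of the empty string) checks out. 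Your route has two genuine advantages: it confronts head-on the fact that for a merely samplable $D$ the values $D_n(x)$ are a priori only lower semicomputable --- the paper's phrase ``one can compute $D_n(y)$'' silently uses the observation that a lower-semicomputable family summing to exactly $1$ is computable to any desired precision, whereas your Kraft--Chaitin/coding-theorem mechanism needs only lower semicomputability --- and it yields the slightly sharper overhead $2\log|x| + O(1)$ in place of $3\log|x| + O(1)$. The cost is that you import the coding theorem (or an effective Kraft--Chaitin construction) as external machinery, while the paper's ranking argument is self-contained apart from the standard self-delimiting encoding of integers.
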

\begin{proof}
    Fix a binary string $x$ and let $n$ be its length. Given $n$ and the code of $\samp$, one can compute $D_n(y)$ for all strings $y$ of length $n$ and then list all these strings in descending order of their $D_n(\cdot)$ probability (with ties broken, say, lexicographically).   The string $x$ is described by its rank $t$ in this list. Since the $D_n$-probability of the first $t$ strings in the order is at most $1$ and at least $t \cdot D_n(x)$, it follows that $t\leq \lceil 1/D_n(x) \rceil$. An overhead of $2 \log (|x|) + O(1)$ bits is added to obtain a self-delimited description in the standard way.
\end{proof}
\begin{lemma}
\label{l:coding-lowerbound} For every distribution $D$, and every $\Delta \geq 0$,
\[
\prob_{x \leftarrow D} ~[K(x) \geq \log \frac{1}{D(x)} - \Delta] \geq 1-2^{-\Delta}.
\]
\end{lemma}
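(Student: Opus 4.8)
The plan is to bound the probability of the complementary ``bad'' event and show that it is at most $2^{-\Delta}$. First I would rewrite the condition: a string $x$ fails the inequality $K(x) \geq \log \frac{1}{D(x)} - \Delta$ precisely when $K(x) < \log \frac{1}{D(x)} - \Delta$, which rearranges to $D(x) < 2^{-\Delta}\cdot 2^{-K(x)}$. So I define the bad set $B = \{x : D(x) < 2^{-\Delta} 2^{-K(x)}\}$ and the goal becomes showing $\prob_{x \leftarrow D}[x \in B] \leq 2^{-\Delta}$. Note that strings with $D(x)=0$ are harmless: they may land in $B$ but contribute nothing to the $D$-weighted sum.

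The key step is to sum the per-string bound over $B$ and invoke Kraft's inequality. Since every $x \in B$ satisfies $D(x) < 2^{-\Delta} 2^{-K(x)}$,
\[
\prob_{x \leftarrow D}[x \in B] = \sum_{x \in B} D(x) < 2^{-\Delta} \sum_{x \in B} 2^{-K(x)} \leq 2^{-\Delta} \sum_{x} 2^{-K(x)} \leq 2^{-\Delta}.
\]
The final inequality $\sum_{x} 2^{-K(x)} \leq 1$ holds because $U$ has prefix-free domain: each $x$ has a shortest program of length $K(x)$, and these programs form a prefix-free set, so Kraft's inequality applies. Taking complements gives $\prob_{x \leftarrow D}[K(x) \geq \log \frac{1}{D(x)} - \Delta] > 1 - 2^{-\Delta}$, which yields the claimed bound.

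The crux, and indeed the only nontrivial ingredient, is the appeal to Kraft's inequality, which is exactly where the prefix-free convention on $U$ (fixed at the start of the section) is essential; for plain complexity $C$ the sum $\sum_x 2^{-C(x)}$ diverges and one would pay an extra logarithmic term. Everything else is a single rearrangement of inequalities, so I do not anticipate any genuine obstacle here.
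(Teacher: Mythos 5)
Your proof is correct and follows essentially the same route as the paper's: bound the $D$-measure of the complementary event by summing the defining inequality $D(x) < 2^{-\Delta}2^{-K(x)}$ over that event and applying Kraft's inequality for the prefix-free code given by shortest programs. The only cosmetic difference is that you keep strict inequalities (which degenerate harmlessly when the bad set is empty) where the paper uses non-strict ones; nothing of substance changes.
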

\begin{proof}
The complement of the event in the probability is $E= \{x \mid D(x) \leq 2^{-\Delta} \cdot 2^{-K(x)}\}$.
We have
\[
D(E) = \sum_{x \in E} D(x) \leq \sum_{x \in E} 2^{-\Delta} \cdot 2^{-K(x)} \leq 2^{-\Delta} \sum_{x \in \zo^*}
2^{-K(x)} \leq 2^{-\Delta} \cdot 1 = 2^{-\Delta}.\]
In the penultimate transition, we have used the Kraft inequality, which is legitimate because $K( \cdot)$ represents the lengths of a prefix-free code.
\end{proof}
\medskip

\noindent
\textbf{\emph{Formal statement of~\Cref{t:main}. }}
The following 2 assertions are equivalent: 
\smallskip

\noindent
\textbf{\emph{Assertion (1): The hypothesis ``\,\aeowf":}} There exists a polynomial-time computable
$f: \zo^* \mapping \zo^*$ with the following property:  For every probabilistic polynomial-time algorithm $\inv$, every $q \in \nat$ and almost every length $n \in \nat$,
\[
\prob_{x \leftarrow U_n, \inv} [\inv(1^n, f(x))  \in f^{-1}(f(x))] \leq 1/n^q.
\]
(The notation $\prob_{x \leftarrow U_n, \inv}$ means that the probability is over $U_n \times \textrm{randomness of } \inv$.)



\smallskip

\noindent
\textbf{\emph{Assertion (2): The hypothesis ``almost optimal compression is hard on average" :}}
 There exists a $P$-samplable distribution $D$ and a constant $c$ with the following property: For every probabilistic polynomial-time algorithm $\compress$, at almost every length $n$, 
      \[
\prob_{x \leftarrow D_n, \compress} [\compress\,(x) \text{ outputs a program of $x$ of length } \geq K(x) + c\log^2 n] > 1/100.
      \]
(The event in the probability expresses the failure of almost optimal compression, and thus assertion (2) states that for any efficient algorithm this failure happens with significant probability.)
\bigskip

      \begin{remark}
The ``infinitely often" version of~\Cref{t:main} is also true, with essentially the same proof. More precisely, if we modify Assertions 1 and 2 by replacing ``almost every length $n$" with ``infinitely many lengths $n$," the modified assertions are also equivalent.
\smallskip

Also, the version in which the additive overhead $c \log^2 n$ is replaced by $n^\gamma$ (for every $\gamma \in (0,1)$ ) is true with essentially the same proof.
\end{remark}
\bigskip

\noindent
\textbf{\emph{Results from the literature that we use.}}
\begin{theorem}[\cite{imp-lub:c:owf-and-crypto,  imp-lev:c:hardinstances}; this variant is stated and proved in~\cite{ila-ren-san:t:owf-kolm}]
\label{t:impag-levin} Assume  the hypothesis ``\aeowf" is not true.
Let $D = (D_n)_{n \in \nat}$ be a P-samplable ensemble of distributions, and $q \in \nat$. There exists a probabilistic polynomial-time algorithm $A$ and a constant $c > 1$  such that for infinitely many $n$,
\[
\prob_{x \leftarrow D_n, A} ~[D_n(x)/c \leq A(x) \leq D_n(x)] \geq 1 - \frac{1}{n^q}.
\]
    \end{theorem}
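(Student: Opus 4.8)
The plan is to reduce approximating $D_n(x)$ to \emph{approximately counting preimages} of the sampler, and then to carry out this counting by combining pairwise-independent hashing with an inverter whose existence is forced by the failure of the hypothesis ``\aeowf''. Write $m = m(n) = \poly(n)$ for the number of random coins used by $\samp$ on input $1^n$, and let $g\colon \zo^m \mapping \zo^n$ be defined by $g(r) = \samp(1^n; r)$. Then $D_n(x) = |g^{-1}(x)| / 2^m$, so it suffices to estimate $\log_2 |g^{-1}(x)|$ up to an additive constant $\log_2 c$. Note that drawing $x \leftarrow D_n$ is the same as drawing $r \leftarrow U_m$ and setting $x = g(r)$, which is what will let the relevant query distributions line up.

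To count $|g^{-1}(x)|$ I would use the standard hashing threshold. For each scale $k \in \{0,1,\dots,m\}$ fix a pairwise-independent family $\mathcal H_k$ of maps $\zo^m \mapping \zo^k$, and consider the polynomial-time computable function $F(r,h) = (g(r), h, h(r))$ with $h \in \mathcal H_k$. Since $\samp$ and the hashing are efficient, $F$ is efficiently computable, so the failure of ``\aeowf'' yields a probabilistic polynomial-time inverter for $F$. Using the hardness-amplification and distributional-inversion machinery of Impagliazzo--Luby and H{\aa}stad et al.~\cite{imp-lub:c:owf-and-crypto, ha-im-lev-lub:j:prgenonewayf}, I would boost this to a distributional inverter $I$ that, on an input drawn as $F(r,h)$ with $r \leftarrow U_m$ and $h$ uniform, returns a near-uniform preimage with high probability, for infinitely many $n$. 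The estimator $A$ then, on input $x$, for each $k$ repeatedly samples $h \leftarrow \mathcal H_k$, calls $I$ on $(x, h, 0^k)$, and checks by recomputation whether the returned $r'$ satisfies $g(r') = x$ and $h(r') = 0^k$; it locates the threshold scale $k^\ast$ at which the empirical success probability drops off, and outputs a value calibrated to $2^{\,k^\ast - m}$.

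The analysis rests on pairwise independence: for fixed $x$ with $N = |g^{-1}(x)|$ and random $h \in \mathcal H_k$, the expected number of fiber elements hashing to $0^k$ is $N/2^k$, and a first/second-moment computation shows that a preimage exists with constant probability when $2^k \lesssim N$ and with probability at most $N/2^k$ when $2^k \gtrsim N$. This pins down $\log_2 N$ up to an additive constant, hence $D_n(x)$ up to the factor $c$, and calibrating $A$ to report a slight underestimate secures the one-sided requirement $A(x) \le D_n(x)$ while keeping $A(x) \ge D_n(x)/c$. To pass from the inverter's average-case success to a guarantee for a $1 - 1/n^q$ fraction of $x \leftarrow D_n$, I would amplify the inverter's success probability by repetition and apply a Markov/averaging argument over the sampling $x = g(r)$, $r \leftarrow U_m$; the ``infinitely many $n$'' quantifier in the conclusion is inherited directly from the infinitely-often inversion guarantee produced by the failure of ``\aeowf''.

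The main obstacle I anticipate is exactly this quantifier and distribution bookkeeping. The raw consequence of the failure of ``\aeowf'' is only a \emph{weak}, infinitely-often inverter (success probability $> 1/n^q$ for a fixed $q$), so two things must be imported essentially as in Impagliazzo--Levin and in~\cite{ila-ren-san:t:owf-kolm}: the amplification to a high-probability distributional inverter, and the argument that feeding the inverter the fixed hash target $0^k$ at a \emph{fixed} $x$ is statistically equivalent --- via pairwise independence, after averaging over $h$ and over $x \leftarrow D_n$ --- to the range distribution on which $I$ is guaranteed to succeed. Reconciling these two average-case guarantees with the per-$x$, constant-factor, one-sided estimate demanded by the statement is the delicate heart of the proof.
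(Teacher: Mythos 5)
The paper does not actually prove this theorem: it is imported as a black box, with the statement attributed to Impagliazzo--Luby and Impagliazzo--Levin and the precise variant credited to the proof in~\cite{ila-ren-san:t:owf-kolm}, so there is no internal proof to compare against. Your sketch follows exactly the approach of those cited proofs --- reduce $D_n(x)$ to counting preimages of the sampler, locate $\log_2|g^{-1}(x)|$ by a pairwise-independent hashing threshold, and implement the threshold test with a distributional inverter obtained from the failure of the hypothesis ``\,\aeowf" via the Impagliazzo--Luby amplification --- and you correctly identify the two delicate points (matching the query distribution to the output distribution of $F$, and upgrading a weak infinitely-often inverter to a high-probability distributional one). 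As written it is a faithful roadmap rather than a self-contained proof, since those two steps are deferred to the very results being proved, but it is the same argument as the one the paper cites.
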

\noindent
In other words: If there are no one-way functions, then P-samplable distributions can be approximated efficiently in the average sense.
\begin{theorem}[\cite{bau-zim:j:univcompression}] 
\label{t:bau-zim}
There exists a probabilistic polynomial-time algorithm $\compress$ that for every input triple $(x \in \zo^*, m \in \nat, \textrm{rational } \epsilon > 0)$ outputs with probability 1 a string $z$ of length $m + O(\log m \cdot \log |x|/\epsilon)$ and if $m \geq K(x)$ then 
\[
\prob_{\compress} [\textrm{$z$ is a program for $x$}] \geq 1-\epsilon. 
\]
\end{theorem}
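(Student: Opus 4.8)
The plan is to realize $\compress$ as a randomized fingerprinting procedure built from an explicit bipartite graph, exploiting the fact that the theorem concerns \emph{unbounded} Kolmogorov complexity: only the compression must run in polynomial time, while the decompression performed by $U$ may take exponential time. Concretely, set $n=|x|$ and $m' = m + O(\log m \cdot \log n/\epsilon)$, and suppose we are given an explicit bipartite graph $G$ with left vertex set $\zo^{*}$, right vertex set $\zo^{m'}$, and left-degree $D=\poly(n)$, whose neighbor function $G(\cdot,\cdot)$ is polynomial-time computable. On input $(x,m,\epsilon)$, the algorithm $\compress$ samples a uniform seed $r\in\zo^{t}$ (where $2^{t}=D$, so $t=O(\log n)$), computes the fingerprint $z_0=G(x,r)\in\zo^{m'}$, and outputs $z=\langle m,n,\epsilon\rangle\,z_0$: a self-delimiting program whose fixed header instructs $U$ to run the decoder below. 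Since $t=O(\log n)$, sampling $r$ and evaluating $G$ are polynomial-time; the seed $r$ is deliberately \emph{not} included, and the total length is $m' + O(\log m + \log n + \log(1/\epsilon)) = m + O(\log m \cdot \log n/\epsilon)$. This length is the same on every run, so the "with probability $1$" clause is immediate.

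Next I would specify the decoder and the ownership rule that makes $z_0$ invert back to $x$. The decoder $\mathcal{D}(z_0,m)$ dovetails $U$ over all programs of length $\le m$ and thereby enumerates $S=\{y:K(y)\le m\}$, which has size $<2^{m+1}$, in a fixed \emph{order of discovery} $y_1,y_2,\dots$. It maintains an ownership assignment on right vertices: when $y_i$ is discovered it becomes the owner of every neighbor $w\in\zo^{m'}$ of $y_i$ that is not yet owned. Because the discovery order and the claims are permanent and reproducible, the owner of any fixed right vertex is well defined after finitely many steps; the decoder runs the enumeration until the first $y_i$ that claims $z_0$ and outputs it. Correctness then reduces to one clean statement: $\compress$ succeeds on $(x,m,\epsilon)$ exactly when the random neighbor $z_0=G(x,r)$ is one that $x$ owns, i.e.\ a neighbor of $x$ not already claimed by any string discovered before $x$.

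The quantitative heart is an expansion (counting) property of $G$. Writing $S'$ for the elements of $S$ discovered before $x$, we have $|S'|<2^{m+1}$, and $\compress$ fails only when $z_0$ lands in $N(S')$, the set of right vertices touched by $S'$; thus it suffices that for every left vertex $x$ and every left set $S'$ with $|S'|\le 2^{m+1}$ one has $|N(x)\cap N(S')|\le\epsilon D$, which gives $\prob_r[z_0\text{ is owned by }x]\ge 1-\epsilon$, exactly the claimed guarantee. A routine first-moment computation shows that a uniformly random graph with right side $\zo^{m'}$ already has this property once $m'\ge m+\log D+\log(1/\epsilon)+O(1)$, i.e.\ with overhead merely $O(\log n+\log(1/\epsilon))$; the larger overhead $O(\log m\cdot\log n/\epsilon)$ in the statement is the price of making $G$ \emph{explicit} and poly-time evaluable through known expander/extractor machinery, which loses logarithmic factors and a $1/\epsilon$ rather than $\log(1/\epsilon)$ dependence.

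The main obstacle, and the genuine technical content, is the explicit construction of a polynomial-time-evaluable degree-$\poly(n)$ bipartite graph satisfying this universal online-matching/expansion property simultaneously for \emph{every} vertex $x$, \emph{every} small set $S'$, and across all relevant scales $m$; this is precisely the online-matching lemma underlying~\cite{bau-zim:j:univcompression}, derived there from explicit lossless-expander/extractor constructions. I would therefore isolate that graph-theoretic lemma as a black box with the stated parameters and verify that feeding it into the fingerprint-and-ownership scheme yields the theorem. The remaining pieces---that $S$ is enumerable by $U$, that ownership is well defined and the decoder halts, that the header costs $O(\log m+\log n+\log(1/\epsilon))$ bits, and that the failure probability is bounded by the expansion inequality---are routine and can be dispatched quickly.
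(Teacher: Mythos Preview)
The paper does not prove Theorem~\ref{t:bau-zim}; it is stated under ``Results from the literature that we use'' and invoked as a black box from~\cite{bau-zim:j:univcompression}. There is therefore no in-paper proof to compare against.

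Your plan does match the argument in~\cite{bau-zim:j:univcompression}: fingerprint $x$ by a random neighbor in an explicit low-left-degree bipartite graph, have the unbounded-time decoder enumerate $\{y:K(y)\le m\}$ and assign ownership of right vertices greedily, and reduce correctness to an online-matching property of the graph furnished by explicit extractor/condenser constructions (whence the $O(\log m\cdot\log|x|/\epsilon)$ overhead rather than the information-theoretic optimum). Two small imprecisions are worth flagging. First, the hypothesis $|N(x)\cap N(S')|\le\epsilon D$ for \emph{every} pair $(x,S')$ with $|S'|\le 2^{m+1}$ is vacuously false whenever $x\in S'$, so at a minimum you must restrict to $x\notin S'$; what is actually used is the online guarantee along the request sequence produced by the enumeration. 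Second, certifying even the restricted property for a random graph is a Chernoff-plus-union-bound argument, not a first-moment computation. Neither point undermines the overall strategy, which is sound and faithful to the source.
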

\noindent
In other words: Given a good approximation of the Kolmogorov complexity of a string $x$, one can efficiently compress $x$ almost optimally (where efficiently means in probabilistic polynomial time).
 \section{Proof of~\Cref{t:main}}
 
\textbf{Proof of assertion (2) $\rightarrow$ assertion (1).} 

We actually prove the contrapositive: \noaeowf $\Rightarrow$ $\neg$ assertion (2) (i.e, almost optimal compression is easy on average).

Let $D = (D_n)_{n \in \nat}$ be a P-samplable ensemble. By~\Cref{l:coding-lowerbound} and~\Cref{l:coding-upperbound}, for some constant $c$, for every $n$
\[
\prob_{x \leftarrow D_n}[ \log \frac{1}{D_n(x)} - c \log n \leq K(x) \leq \log \frac{1}{D_n(x)} + c \log n] \geq 1 - 1/n.
\]
Under our assumption ``\noaeowf," \Cref{t:impag-levin} states that there exists an  algorithm that approximates $D_n(x)$ with high probability, and therefore it also approximates $K(x)$ with high probability. More precisely, by rescaling, we get a probabilistic polynomial-time algorithm $A$ such that, for every $n$,
\[
\prob_{x \leftarrow D_n, A}[ K(x) \leq A(x) \leq K(x) + c \log n] \geq 1 - 1/n.
\]
Then, the algorithm $\compress$ from~\Cref{t:bau-zim} with $m = A(x)$ and $\epsilon = 1/200$ shows the invalidity of assertion (2).
\medskip

\noindent
\textbf{Proof of assertion (1) $\rightarrow$ assertion (2).} 

(\aeowf $\Rightarrow$  almost optimal compression is hard on average.)
\medskip

The idea is that an efficient  good compressor would break the security of any candidate pseudo-random generator (p.r.g.), because the output of the generator can be compressed to a much shorter string, whereas a genuinely random string cannot. Therefore, pseudorandom generators would not exist and hence there would be  no OWF, contradicting assertion (1).
Now, the details. 


Suppose ``\,\aeowf" is true. Then, by~\cite{ha-im-lev-lub:j:prgenonewayf} combined with the methods to obtain ensembles of p.r.g.'s with every possible output length~\cite[Section 3.3.3]{gol:b:crypto},  there exists an ensemble of p.r.g.'s $G = (G_n)_{n \in \nat}$, computable in polynomial time (uniformly), with $G_n : \zo^{s(n)} \mapping \zo^{n}$, where the seed length $s(n)$ is bounded by $n^{1/3}$, that satisfies the following security guarantee:  For every probabilistic polynomial-time algorithm  $T$ (the hypothetical distinguisher) and every $q \in \nat$, 
for almost every $n \in \nat$, the probabilities that (a) $T$ accepts $G_n (U_{s(n)})$ and (b) $T$ accepts $U_{n}$, differ by at most $1/n^q$.  
\smallskip

Consider the following P-samplable distribution $D_{n}$: 

\quad\quad with probability $1/2$, output $G (U_{s(n)})$ and with probability 1/2, output $U_{n}$. 
\smallskip

Clearly, if assertion (2) is false, then there exists a constant $c$, a probabilistic polynomial-time algorithm $A$ (derived from $\compress\textbf{}$ in the straightforward way), and an infinite set $B$, so  that at every length $n \in B$,  with $D_n \times \textrm{(randomness of $A$)}$-probability $\geq 1-1/100$  approximates $K(x)$ with slack at most $c\log^2 n$. 
Let
 \[
 \textrm{BAD}=\{x \in \{0,1\}^n \mid \prob_{A}[\,|A(x) - K(x)| \ge c \log^2 n\,] \ge 5/100\}.
 \]
(In other words: BAD is the event
which says that $A$ fails to approximate $K$ with significant probability over the randomness of $A$.)

Then, for every $n \in B$, by Markov's inequality,
\begin{equation*}
\label{e:strong}
\prob_{x \leftarrow D_n} [\textrm{BAD} ] \le 1/5,
 \end{equation*}
  
By inspecting the sampling procedure, we see that each element $x$ in BAD has $D_{n}$-probability mass at least $(1/2) \cdot 2^{-n}$  
and thus $1/5 \geq D_{n}(\textrm{BAD}) \geq  (\# BAD) \cdot (1/2 \cdot 2^{-n}) = 1/2 \cdot \prob_{U_{n}} [BAD]$, and so 
\[
\prob_{U_{n}} [BAD]\leq 2/5.
\] 
Also, each element in $\textrm{BAD} \,\cap\, \textrm{Im}(G(U_{s(n)})$ has $D_{n}$-probability mass at least $(1/2) \cdot 2^{-s(n)}$, which, similarly to the above,  implies that  
\[
\prob_{U_{s(n)}} [BAD \,\cap\, \textrm{Im}(G(U_{s(n)}))] \leq 2/5.
\]

We now define the probabilistic polynomial-time distinguisher $T$: $T$ on input $z$ of length $n$, executes $A$ on input $z$, and \emph{accepts}, if $A(z) \le  n^{1/2}$, and \emph{rejects} otherwise. Note that $G(U_{s(n)})$ with probability
1 has prefix-free complexity at most $s(n) + 2 \log s(n) + O(1) \leq n^{1/3}+ O(\log n)$. Therefore, if $G(U_{s(n)}) \not \in \textrm{BAD}$ and if $A$ uses randomness that yields good approximation, then $T$ accepts $G(U_{s(n)})$. Also, $U_{n}$, with probability at least 1-1/n, has complexity at least $n- \log n$. If this is the case, and $U_n \not \in \textrm{BAD}$, and $A$ uses correct randomness, then $T$ rejects $U_n$.

Therefore, for every  $n \in B$, 
\[
\prob_{U_{s(n)},T}[T \textrm{ accepts } G(U_{s(n)})] \geq  (1 - 2/5) \cdot (1 - 5/100) = 57/100
\]
(the probability that $G(U_{s(n)})$ is not in BAD is at least $1-2/5$ and for strings not in BAD the probability that $A$ uses correct randomness is at least $1-5/100$).  

Also, for every $n \in B$,
\[
\prob_{U_{n},T}[ T \textrm{ accepts } U_{n}] \leq 1/n + 2/5 + 5/100 = 45/100 + 1/n.
\]
\noindent
($1/n$ is the probability that $U_n$ has complexity less than $n- \log n$, 2/5 is the probability that $U_n$ is in BAD, and $5/100$ is the probability conditioned on $U_n \not \in \textrm{BAD}$ that $A$ is using wrong randomness).

We are done, because the two inequalities
contradict the security of $G$. 
\medskip

\bibliographystyle{alpha}
\bibliography{theory-3}

\end{document}